\documentclass[12pt,a4paper]{amsart}
\setlength{\textwidth}{16.5cm}
\setlength{\oddsidemargin}{-1.7mm}
\setlength{\evensidemargin}{-1.7mm}
\setlength{\textheight}{23.5cm}
\setlength{\topmargin}{-.5cm}

\usepackage[abbrev]{amsrefs}
\usepackage{amssymb}

\theoremstyle{plain}
  \newtheorem{theorem}{Theorem}[section]

  \newtheorem{proposition}[theorem]{Proposition}

\theoremstyle{definition}

\theoremstyle{remark}
  \newtheorem{remark}[theorem]{Remark}
  \newtheorem*{ack}{Acknowledgment}

\numberwithin{equation}{section}
\begin{document}
\title[]{Change the coefficients of conditional entropies\\ in extensivity}
\author[]{Asuka Takatsu$^{\ast\dagger}$}
\address{\scriptsize$\ast$ Department of Mathematical Sciences, Tokyo Metropolitan University, Tokyo {192-0397}, Japan.}
\email{asuka@tmu.ac.jp}
\address{\scriptsize$\dagger$ Mathematical Institute, Tohoku University, Sendai 980-8578, Japan.}

%
\date{\today}
\keywords{entropy, Shannon--Khinchin axioms, extensivity, internal energy.}
\subjclass[2020]{94A17, 62B10}
\maketitle
\vspace{-17pt}
\begin{abstract}
The Boltzmann--Gibbs entropy is a functional on the space of probability measures.
When a state space is countable, 
one characterization of the Boltzmann--Gibbs entropy is given by the Shannon--Khinchin axioms,
which consist of continuity, maximality, expandability and extensivity.
Among these four properties,  the extensivity is generalized in various ways.
The extensivity of a functional is interpreted as the property that,
for any  random variables $(X,Y)$ taking finitely many values in $\mathbb{N}$,
the difference between the value of the functional at the joint law of $(X,Y)$ and  that at the law of $X$
coincides with the linear combinations of the values  at the conditional laws of $Y$ given $X=n$ with coefficients given by the probabilities of each event $X=n$.
A generalization of the extensivity obtained  by replacing the coefficients with a power of the probabilities of the events $X=n$ provides a characterization of the Tsallis entropy.

In this paper, we first prove the impossibility to replace the coefficients with a non-power function of  the probabilities of the events $X=n$.
Then we  estimate the difference between the value at the joint law of $(X,Y)$ and that at the law of $X$ for a general functional.
\end{abstract}

\section{Introduction and results}
The notion of entropy is a fundamental ingredient in statistics, information theory and so on.
The origin of entropy lies in thermodynamics, and the negative of the Boltzmann--Gibbs entropy is considered as an internal energy.
For  $n\in \mathbb{N}$, set  
\[
\mathcal{P}_n:=\left\{  p=(p_j)_{j=1}^n \in \mathbb{R}^n \ \Bigg|\ p_j \geq 0 \quad \text{for\ } 1\leq j \leq n,\quad  \sum_{j=1}^n p_j=1 \right\},
\qquad 
\mathcal{P}:=\bigsqcup_{n\in \mathbb{N}}\mathcal{P}_n.
\]
Then the \emph{Boltzmann--Gibbs entropy} $S_1: \mathcal{P} \to \mathbb{R}$ is defined by 
\[
S_1(p)=-\sum_{j=1}^n p_j \log p_j \qquad\text{for}\ p\in \mathcal{P}_n,
\]
where by convention $0 \log 0:=0$.
The following characterization of the Boltzmann--Gibbs entropy is called the \emph{Shannon--Khinchin} axioms. 
%
\begin{theorem}{\rm(\cite{Khinchin}*{Theorem 1})}
The Boltzmann--Gibbs entropy  is a unique functional $S$ on ~$\mathcal{P}$ that satisfies the following four properties:
\begin{itemize}
\setlength{\leftskip}{-8pt} 
\item[{\rm (I)}]{\rm (continuity)}
$S$ is continuous on $\mathcal{P}_n$ for $n\in \mathbb{N}$.
\item[{\rm (II)}]{\rm (maximality)}
Define   $p^{(n)}\in \mathcal{P}_n$ by $p^{(n)}_j\equiv 1/n$.
Then  $S(p) \leq S(p^{(n)})$ for $p\in \mathcal{P}_n$.
\item[{\rm (III)}]{\rm (expandability)}
For $p=(p_j)_{j=1}^{n}\in \mathcal{P}_{n}$, $S((p_1, \cdots, p_n, 0))=S(p)$.
\item[{\rm (IV)}]{\rm (extensivity)}
For $P=(p^i_j)_{1\leq i\leq m, 1\leq j\leq n}\in \mathcal{P}_{mn}$,  if 
\begin{equation}\label{mat}
p_j: =\sum_{i=1}^m p^i_j>0 \quad\text{for\ }1\leq j\leq n,
\end{equation}
then 
\begin{equation}\label{SK}
S(P)=S((p_1, \cdots, p_n))+\sum_{j=1}^n p_j  S\left( \left( \frac{p^1_j}{p_j}, \cdots \frac{p^m_j}{p_j}\right)\right).
\end{equation}
\end{itemize}
Here the uniqueness is up to  a multiplicative positive constant.
\end{theorem}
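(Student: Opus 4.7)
The plan is to follow the classical Khinchin-style argument: first pin down the value of $S$ on uniform distributions, then extend to rational distributions via extensivity, and finally pass to arbitrary distributions by continuity.

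First, I would set $L(n):=S(p^{(n)})$. Applying (IV) to the uniform distribution $P=(p^i_j)\in\mathcal{P}_{mn}$ with $p^i_j=1/(mn)$, all column sums equal $1/n$ and every conditional is uniform on $m$ outcomes, which yields the Cauchy-type relation $L(mn)=L(m)+L(n)$; taking $m=n=1$ shows $L(1)=0$. Next, by (III), $S((1/n,\ldots,1/n,0))=S(p^{(n)})=L(n)$, and (II) on $\mathcal{P}_{n+1}$ bounds this above by $L(n+1)$, so $L$ is nondecreasing on $\mathbb{N}$. A standard elementary squeeze (for fixed integers $r,s\ge 2$, bracket $s^n$ between consecutive powers of $r$ and use additivity plus monotonicity) then forces $L(n)=c\log n$ for some $c\ge 0$, fixing $S$ on uniform distributions up to the overall constant.

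Next, I would treat distributions with rational entries. For $p=(k_1/N,\ldots,k_n/N)\in\mathcal{P}_n$ with $k_j\in\mathbb{N}$, $k_j\ge 1$, and $N=\sum_j k_j$, let $m:=\max_j k_j$ and construct $P=(p^i_j)\in\mathcal{P}_{mn}$ by $p^i_j=1/N$ for $i\le k_j$ and $p^i_j=0$ otherwise. Then the column sums are $p_j=k_j/N>0$ so that \eqref{mat} holds, each column conditional is uniform on $k_j$ outcomes padded with zeros, and $P$ itself is, after suppressing zeros via (III), uniform on $N$ outcomes. Substituting into \eqref{SK} and using the formula from the previous step gives
\[
c\log N=S(p)+c\sum_{j=1}^{n}p_j\log k_j,
\]
which rearranges to $S(p)=-c\sum_{j=1}^{n}p_j\log p_j$. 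Any $p\in\mathcal{P}_n$ with all $p_j>0$ is a limit of rational distributions in $\mathcal{P}_n$, so (I) extends the formula to all such $p$; and (III) reduces the remaining case (some $p_j=0$) to a lower-dimensional $\mathcal{P}_{n'}$. Normalizing $c$ to $1$ yields the Boltzmann--Gibbs entropy.

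The main obstacle I expect is the passage from $L(mn)=L(m)+L(n)$ together with monotonicity to the closed form $L(n)=c\log n$. Although elementary, it is the only point where monotonicity is used in an essentially analytic rather than algebraic way, and the squeeze needs care to handle both the degenerate case $c=0$ and the generic case $c>0$. A secondary technical nuisance is that (IV) requires the strict positivity $p_j>0$, so one must invoke (III) repeatedly to insert or delete zero entries, both in verifying $L$ is monotone and in the construction of the rational matrix $P$ above.
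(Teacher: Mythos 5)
The paper itself offers no proof of this theorem --- it is quoted directly from Khinchin --- so the only meaningful comparison is with the classical argument, and that is essentially what you reproduce: the Cauchy relation $L(mn)=L(m)+L(n)$ plus monotonicity pins down $S$ on uniform distributions, the padded-matrix construction handles rational distributions, and continuity finishes. The structure and the computations are correct, and you rightly flag the squeeze $L(n)=c\log n$ as the one genuinely analytic step. There is, however, one point that does not follow from the axioms as literally stated: the claim that $S(P)=c\log N$ for your padded matrix. Under any fixed identification of $(p^i_j)_{i,j}$ with a vector in $\mathcal{P}_{mn}$, the $mn-N$ zero entries of $P$ are interspersed among the $N$ entries equal to $1/N$ (already for $n=2$, $k_1=1$, $k_2=2$ no row- or column-major ordering puts the zero last), whereas (III) only permits deleting a zero in the \emph{final} position. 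Your column conditionals are safe, since you may place the $k_j$ nonzero entries first within each column, and so is the monotonicity step $L(n)\le L(n+1)$, where the zero is appended at the end; but identifying $S(P)$ with $L(N)$ needs either an explicit symmetry hypothesis (entropy as a function of the unordered scheme, which is how Khinchin tacitly reads the axioms and how later axiomatizations state it explicitly) or a derivation of permutation invariance from (I)--(IV), which is not immediate. This is a known defect of the axiom system as stated rather than of your strategy, but as written the step is a gap and deserves at least a sentence acknowledging the extra assumption. A minor further point: your final appeal to (III) for distributions with vanishing coordinates runs into the same trailing-zero restriction, but it is dispensable --- continuity (I) together with $\lim_{x\downarrow 0}x\log x=0$ already extends the formula from strictly positive rational distributions, which are dense in $\mathcal{P}_n$, to all of $\mathcal{P}_n$.
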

If a functional $S$ on $\mathcal{P}$ has density and $-S$ can be regarded as an internal energy, 
then the density  should be concave and vanishes at $0$ (for example, see~\cite{Mc}).
In this case,  $S$ satisfies the properties (I)--(III) but may not the property~(IV)
(see  Proposition~\ref{inter}).
The property~(IV) reduces  to additivity for  independent two systems. 
More generally,  the property (IV) is interpreted as  follows.
For any  random variables $X_k$ on a probability space~$(\Omega, \mathbb{P})$ taking values in $\{1, \cdots, k\}$, 
the difference between the value of the functional at the joint law of~$(X_n,X_m)$ and  that at the law of $X_n$
coincides with the linear combinations of the values  at the conditional laws of $X_m$ given $X_n=j$ 
with coefficients given by the probabilities of each event $X_n=j$, 
where  we consider
\[
p^i_j=\mathbb{P}(X_m=i, X_n=j), \quad
p_j=\mathbb{P}(X_n=j), \quad
\frac{p^i_j}{p_j}= \mathbb{P}(X_m=i \ |\ X_n=j).
\]

The property (IV) is generalized in various ways.
For example, see~\cite{Abe, IS, Tempesta, Tsallis} and references therein.
In particular, for $q>0$ with $q\neq 1$, if we replace the coefficients  in \eqref{SK}  with  the $q$th power of the probabilities of the events $X_n=j$,
that is,  \eqref{SK} is modified as
\begin{equation}\label{SKq}
S(P)=S((p_1, \cdots, p_n))+\sum_{j=1}^n p_j^q S\left( \left( \frac{p^1_j}{p_j}, \cdots \frac{p^m_j}{p_j}\right)\right),
\end{equation}
then this property is satisfied by the Tsallis entropy $S_q$ defined by
\[
S_q(p):=\frac{1}{q-1} \left(1-\sum_{j=1}^n p_j^q \right)
\quad
\text{for}\ p\in \mathcal{P}_n.
\]
Note that 
\[
S_q(p) \xrightarrow{q\to 1} S_1(p) \quad \text{for}\ p\in \mathcal{P}.
\]
Suyari~\cite{Suyari} proved that the Tsallis entropy $S_q$ is a unique functional that satisfies \eqref{SKq} in addition to the properties~(I)---(III) up to  a multiplicative positive constant. 

In this paper, we first show the impossibility to replace the coefficients in~\eqref{SK} with a non-power function of  the probabilities of the events $X_n=j$
for all functionals on $\mathcal{P}$ with $C^2$-density whose second derivative is negative.
\begin{theorem}\label{power}
For $s \in C( [0,1]) \cap C^2((0, 1])$ with $s''<0$ on $(0,1]$, 
define a functional $S:\mathcal{P} \to \mathbb{R}$ by 
\[
S(p):=\sum_{j=1}^n s(p_j)\quad \text{for}\ p\in \mathcal{P}_n.
\]
If there exists $f:(0,1]\to \mathbb{R}$ such that  
\begin{align}\label{f}
S(P) =S ((p_1, \cdots, p_n))+
\sum_{j=1}^n  f(p_j)  S\left( \left( \frac{p^1_j}{p_j}, \cdots \frac{p^m_j}{p_j}\right)\right)
\end{align}
holds for  $P\in \mathcal{P}_{mn}$ satisfying \eqref{mat}, 
then there exists $q> 0$  such that $f(r)=r^{q}$ on $(0,1]$
and $S$ coincides with $S_q$ up to an additive constant and a multiplicative positive constant.
\end{theorem}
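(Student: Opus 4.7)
My plan is to reduce the hypothesis~\eqref{f} to a single functional equation for $s$, exploit it to force $f$ to be multiplicative (hence a power), and finally integrate an ODE for $s$. I would begin by specializing~\eqref{f} to matrices $P\in\mathcal{P}_{mn}$ whose conditional columns are independent of $j$, i.e.\ $p^i_j = p_j q_i$ for a fixed $(q_i)\in\mathcal{P}_m$. Substituting $S(p)=\sum_j s(p_j)$ collapses the identity to $\sum_{j=1}^n g(p_j)=0$ on every $\mathcal{P}_n$, where $g(p):=\sum_i s(pq_i)-s(p)-f(p)\sum_i s(q_i)$. The cases $n=2$ and $n=3$ together yield the Cauchy additivity $g(b+c)=g(b)+g(c)$ on $(0,1)$, and continuity with $g(1/2)=0$ forces $g\equiv 0$, giving the key identity
\[
\sum_i s(pq_i)=s(p)+f(p)\sum_i s(q_i) \qquad \bigl((q_i)\in\mathcal{P}_m,\ p\in(0,1]\bigr).
\]
Taking $(q_i)=(1)$ forces $f(1)=1$ and $s(1)=0$; taking $(q_i)=(1,0)$ gives $s(0)[1-f(p)]=0$. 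The degenerate case $f\equiv 1$ is excluded by differentiating the two-atom specialization twice in $t$ at $t=1/2$, which produces $s''(x)=s''(1/2)/(4x^2)$ and hence a logarithmic singularity of $s$ at $0$ incompatible with $s\in C([0,1])$. Hence $s(0)=0$, and strict concavity then forces $s(1/2)>0$.

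To obtain the multiplicativity of $f$, I would apply the key identity to the composite distribution $(q_i)=(rt_1,\dots,rt_\ell,1-r)$ at $p$, and then expand the resulting sums $\sum_k s(prt_k)$ and $\sum_k s(rt_k)$ by reapplying the key identity to $(t_k)$ at the points $pr$ and $r$ respectively. Using the two-atom case $s(pr)+s(p(1-r))=s(p)+f(p)[s(r)+s(1-r)]$ cancels every $s$-dependent contribution and leaves
\[
\bigl(f(pr)-f(p)f(r)\bigr)\sum_k s(t_k)=0.
\]
Choosing $(t_k)=(1/2,1/2)$, for which $\sum_k s(t_k)=2s(1/2)>0$, forces $f(pr)=f(p)f(r)$. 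A continuous multiplicative function on $(0,1]$ with $f(1)=1$ is automatically positive (since $f(x)=f(x^{1/n})^n$ and $f$ cannot vanish near $1$), so $f(p)=p^q$ for some $q\in\mathbb{R}$.

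With $f(p)=p^q$ the two-atom specialization reads $s(pt)+s(p(1-t))=s(p)+p^q[s(t)+s(1-t)]$. Differentiating twice in $t$ at $t=1/2$ gives $s''(x)=s''(1/2)(2x)^{q-2}$ on $(0,1/2]$. Differentiating once in $p$ at $t=1/2$ yields $s'(p/2)-s'(p)=2qs(1/2)p^{q-1}$, and a further $p$-derivative combined with the formula on $(0,1/2]$ extends this to $s''(p)=Cp^{q-2}$ on all of $(0,1]$ for a constant $C<0$. Integrating twice and imposing $s(0)=s(1)=0$ forces $q>0$ (otherwise $x^q$ or $\log x$ blows up at $0$) and pins down $s$ as a positive multiple of the Tsallis density $s_q(x)=(x-x^q)/(q-1)$ if $q\neq 1$, or of $-x\log x$ if $q=1$. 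Summing over $j$ gives $S=aS_q$ with $a>0$, as claimed.

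The most delicate step is the derivation of multiplicativity: the double application of the key identity must be arranged so that every $s$-dependent contribution cancels, leaving only $f(pr)-f(p)f(r)$ as the residue. The other place deserving care is the exclusion of $f\equiv 1$, where the strict concavity $s''<0$ and $C^2$ regularity are used together to produce a singularity incompatible with continuity of $s$ at $0$.
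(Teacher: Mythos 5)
Your route is genuinely different from the paper's: the paper fixes a single $3\times 2$ matrix, differentiates \eqref{f} twice in an internal parameter $x$ to get $r^2\{s''(rx)+s''(r(\xi-x))\}=f(r)\{s''(x)+s''(\xi-x)\}$, reads off both the continuity and the multiplicativity of $f$ from that one identity, and integrates $s''(r)=s''(1)r^{q-2}$. Granting your ``key identity'' $\sum_i s(pq_i)=s(p)+f(p)\sum_i s(q_i)$, the rest of your argument — the exclusion of $f\equiv 1$, the cancellation trick with the composite distribution $(rt_1,\dots,rt_\ell,1-r)$ yielding $f(pr)=f(p)f(r)$, and the integration of $s''(p)=Cp^{q-2}$ with $s(0)=s(1)=0$ forcing $q>0$ — is correct and rather elegant.

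However, there is a genuine gap at the foundational step, and everything downstream rests on it. From the product specialization you obtain only $\sum_{j=1}^n g(p_j)=0$, hence $g(b+c)=g(b)+g(c)$ and $g(1/2)=0$, and you then invoke ``continuity'' to conclude $g\equiv 0$. But $g(p)=\sum_i s(pq_i)-s(p)-f(p)\sum_i s(q_i)$ contains $f$, and the theorem assumes nothing about $f$ beyond its existence as a function on $(0,1]$ — no continuity, not even measurability. A Cauchy-additive function vanishing at $1$ need not vanish identically (Hamel-basis constructions give nonlinear $\phi$ with $\sum_j\phi(p_j)=\phi(1)=0$ for every probability vector), so the step fails as written. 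Worse, the continuity of $f$ that you need later for ``a continuous multiplicative function is a power'' is itself only obtained from the key identity you are trying to establish, so the argument is circular. The missing idea — which is exactly what the paper's choice of matrix supplies — is to freeze all but one column of $P$ so that $f$ enters \eqref{f} as a constant coefficient of a term that can be varied independently. For example, taking $P\in\mathcal{P}_{22}$ with first column $(pt,p(1-t))$ and second column $\bigl(\tfrac{1-p}{2},\tfrac{1-p}{2}\bigr)$ and subtracting the instance $t=1/2$ gives $s(pt)+s(p(1-t))-2s(p/2)=f(p)\bigl(s(t)+s(1-t)-2s(1/2)\bigr)$; since $s''<0$ makes the bracket strictly negative for $t\neq 1/2$, this solves for $f(p)$ as an explicit continuous expression in $s$. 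With continuity of $f$ secured first, your Cauchy step becomes legitimate and the remainder of your proof goes through.
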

%
We next estimate the difference between the value at the joint law of $(X_m,X_n)$ and that  at the law of $X_n$ 
for a general functional on $\mathcal{P}$ which can be regarded as the negative of an  internal energy. 
Let $S$ be a functional on $\mathcal{P}$ with density $s$ such that 
$s$ is concave on~$[0,1]$ and $s(0)=0$.
We notice that 
\[
\widetilde{s}(r):=s(r)-s(1)r 
\]
is also concave on $[0,1]$ and $\tilde{s}(0)=0$.
Moreover,  $\tilde{s}(1)=0$.
Since the functional  $\widetilde{S}$  on $\mathcal{P}$  with density $\widetilde{s}$ satisfies
\[
\widetilde{S}(P)-\widetilde{S}((p_1, \cdots, p_n))=S(P)-S((p_1, \cdots, p_n))
\]
for $P\in \mathcal{P}_{mn}$ satisfying \eqref{mat},
to estimate the difference $S(P)-S((p_1, \cdots, p_n))$, 
we can assume that  the density $s$ of $S$  vanishes at $1$ without loss of generality.
In addition to the concavity $s$ on $[0,1]$,
we assume that  $s\in C^2((0, 1])$ with  $s''<0$ on $(0,1]$ and  the finiteness of 
$ \sup\{ {s''(rt)}/{s''(t)} \ | t\in (0,1] \}$ for $r\in (0,1]$.
Then $ \inf \{ {s''(rt)}/{s''(t)} \ | t\in (0,1] \}$ is obviously nonnegative and finite for $r\in (0,1]$. 
\begin{theorem}\label{diff}
For $s \in C( [0,1]) \cap C^2((0, 1])$ with $s(0)=s(1)=0$ and $s''<0$ on $(0,1]$, 
define a functional $S:\mathcal{P} \to \mathbb{R}$ by 
\[
S(p):=\sum_{j=1}^n s(p_j)\quad \text{for}\ p\in \mathcal{P}_n.
\]
Assume the finiteness of $\sup\{ {s''(rt)}/{s''(t)} \ | t\in (0,1] \}$ for $r\in (0,1]$ and 
define $\underline{f},\overline{f}:(0,1]\to \mathbb{R}$ by
\[
\underline{f}(r):=r^{2}\cdot \inf_{t\in (0,1]} \frac{s''(rt)}{s''(t)},\qquad
\overline{f}(r):=r^{2}\cdot \sup_{t\in (0,1]} \frac{s''(rt)}{s''(t)},
\]
respectively.
Then 
\begin{align*}
&
\sum_{j=1}^n \underline{f}(p_j)  S\left( \left( \frac{p^1_j}{p_j}, \cdots \frac{p^m_j}{p_j}\right)\right)
+s'(1)\sum_{j=1}^n \Big(\overline{f}(p_j) -\underline{f}(p_j)\Big)\\
&\leq  S(P)-S((p_1,\cdots, p_j)) \\
&\leq \sum_{j=1}^n \overline{f}(p_j)  S\left( \left( \frac{p^1_j}{p_j}, \cdots \frac{p^m_j}{p_j}\right)\right)
-s'(1)\sum_{j=1}^n \Big(\overline{f}(p_j) -\underline{f}(p_j)\Big)
\end{align*}
for  $P\in \mathcal{P}_{mn}$ satisfying \eqref{mat}.
\end{theorem}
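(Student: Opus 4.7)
The plan is to reduce the estimate to a pointwise two-variable comparison and then sum. Concretely, I first establish that for all $a,q\in(0,1]$,
\[
\underline{f}(a)\, s(q) \;\leq\; s(aq)-q\, s(a) \;\leq\; \overline{f}(a)\, s(q).
\]
To prove this per-pair bound, I fix $a\in(0,1]$ and a real constant $c$, and introduce $h_c(q):=s(aq)-q\, s(a)-c\, s(q)$ on $[0,1]$. The hypotheses $s(0)=s(1)=0$ make $h_c(0)=h_c(1)=0$, and direct differentiation gives
\[
h_c''(q) \;=\; a^{2} s''(aq)-c\, s''(q) \;=\; s''(q)\!\left[a^{2}\frac{s''(aq)}{s''(q)}-c\right].
\]
By the very definitions of $\underline{f}(a)$ and $\overline{f}(a)$, the bracket is nonnegative when $c=\underline{f}(a)$ and nonpositive when $c=\overline{f}(a)$; coupled with $s''<0$, this makes $h_{\underline{f}(a)}$ concave and $h_{\overline{f}(a)}$ convex on $[0,1]$. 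Since both functions vanish at the endpoints, concavity forces $h_{\underline{f}(a)}\geq 0$ and convexity forces $h_{\overline{f}(a)}\leq 0$, which is the per-pair inequality.

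I then apply the comparison with $a=p_j$ and $q=p^i_j/p_j\in(0,1]$, which is permissible by~\eqref{mat}, and sum over $i=1,\ldots,m$. Using $\sum_{i=1}^m p^i_j/p_j=1$ and $s(p_j\cdot p^i_j/p_j)=s(p^i_j)$, this yields, for each $j$,
\[
\underline{f}(p_j)\, S\!\left(\!\left(\tfrac{p^1_j}{p_j},\ldots,\tfrac{p^m_j}{p_j}\right)\!\right) \;\leq\; \sum_{i=1}^m s(p^i_j)-s(p_j) \;\leq\; \overline{f}(p_j)\, S\!\left(\!\left(\tfrac{p^1_j}{p_j},\ldots,\tfrac{p^m_j}{p_j}\right)\!\right).
\]
Summing over $j=1,\ldots,n$ and invoking the definition of $S$ produces the sharper estimate
\[
\sum_{j=1}^n \underline{f}(p_j)\, S\!\left(\!\left(\tfrac{p^1_j}{p_j},\ldots,\tfrac{p^m_j}{p_j}\right)\!\right) \;\leq\; S(P)-S((p_1,\ldots,p_n)) \;\leq\; \sum_{j=1}^n \overline{f}(p_j)\, S\!\left(\!\left(\tfrac{p^1_j}{p_j},\ldots,\tfrac{p^m_j}{p_j}\right)\!\right).
\]

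To conclude, I observe that a concave $s$ on $[0,1]$ with $s(0)=s(1)=0$ is $\geq 0$ there, and the tangent-line inequality at $q=1$ then forces $s'(1)\leq 0$; since also $\overline{f}(p_j)-\underline{f}(p_j)\geq 0$, the additional terms $s'(1)\sum_{j}(\overline{f}(p_j)-\underline{f}(p_j))$ are nonpositive, so the sharper estimate above immediately implies the inequalities asserted in the theorem. The only delicate step is the per-pair comparison; the key observation that unlocks it is that $h_c$ vanishes at both endpoints thanks to the double normalization $s(0)=s(1)=0$, after which the argument reduces to reading off the sign of $h_c''$ from the definitions of $\underline{f}$ and $\overline{f}$, and the finiteness hypothesis on $\sup_t s''(rt)/s''(t)$ is exactly what keeps $\overline{f}$ real-valued so this sign analysis is meaningful.
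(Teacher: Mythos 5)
Your proof is correct, and it takes a genuinely different route that in fact yields a sharper conclusion. Both arguments ultimately rest on the same pointwise comparison $\underline{f}(a)\,s''(u)\geq a^{2}s''(au)\geq\overline{f}(a)\,s''(u)$, but they integrate it up differently: the paper integrates over $u\in[t,1]$ and then $t\in[0,r]$, which produces the boundary term $p_js'(p_j)r$ in \eqref{once}, and eliminating that term afterwards (via the case $r=1$) is exactly what creates the correction terms $\pm s'(1)\sum_{j}\bigl(\overline{f}(p_j)-\underline{f}(p_j)\bigr)$ in the statement. You instead package the comparison into $h_c(q)=s(aq)-qs(a)-cs(q)$, which vanishes at \emph{both} endpoints of $[0,1]$ because $s(0)=s(1)=0$, and read off its sign from the sign of $h_c''$ together with concavity/convexity --- in effect integrating the same second-order inequality against the Green's function of the two-point boundary problem rather than one-sidedly. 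The payoff is the cleaner estimate $\sum_{j}\underline{f}(p_j)S((p^1_j/p_j,\ldots,p^m_j/p_j))\leq S(P)-S((p_1,\ldots,p_n))\leq\sum_{j}\overline{f}(p_j)S((p^1_j/p_j,\ldots,p^m_j/p_j))$, which implies the theorem since $s'(1)\leq0$ and $\overline{f}\geq\underline{f}$, and which is strictly stronger whenever $s'(1)<0$ and $\overline{f}\neq\underline{f}$; in particular your lower bound is always nonnegative, so it never drops below the trivial bound $S(P)\geq S((p_1,\ldots,p_n))$, whereas the paper's can (this is the point of \eqref{iff} and the example following it). Two small things to tidy: \eqref{mat} guarantees $p_j>0$ but not $p^i_j>0$, so $q=p^i_j/p_j$ may equal $0$; this is harmless because your $h_c$ argument gives the per-pair inequality on all of $[0,1]$, where all three terms vanish at $q=0$. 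And you should note explicitly that concavity (resp.\ convexity) of $h_c$ on $(0,1]$ extends to $[0,1]$ by continuity of $s$ at $0$ before comparing $h_c$ with the chord through its endpoints.
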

\begin{remark}
(1)
If $S=S_q$ with $q>0$, 
then  $\underline{f}(r)=\overline{f}(r)=r^q$  on $(0,1]$
and all inequalities  in Theorem~\ref{diff} become equality.

(2)
For $s\in C([0,1]) \cap C^2((0, 1])$ with $s''<0$ on $(0,1]$, 
$ \sup\{ {s''(rt)}/{s''(t)} \ | t\in (0,1] \}=\infty$ may happen for some $r\in (0,1)$.
Indeed, if we define 
\[
s(r):=-\int_0^r \int_0^t u \left(| \cos (1/u)|+ u \right)du dt \quad \text{for\ }r\in [0,1],
\]
then $s\in C([0,1]) \cap C^2((0, 1])$ with
\[
s''(r)=- r \left(| \cos (1/r)|+ r \right)<0\quad \text{for\ }r\in (0,1].
\]
For $t_k:=\{(k+\frac12)\pi\}^{-1}$ with $k\in \mathbb{N}$, 
we observe that 
\[
 \frac{s''(t_k/2)}{s''(t_k)}
 =\frac{\frac12\left(|\cos ( (2k+1) \pi)|+ \tfrac{1}{(2k+1)\pi}\right)}{| \cos ((k+\frac12)\pi)|+ \tfrac{1}{(k+\frac12)\pi}}
 =\frac12\left\{\left(k+\frac{1}{2}\right)\pi+  \frac{1}{2}\right\}\xrightarrow{k\to \infty}\infty,
 \]
implying $ \sup\{ {s''(t/2)}/{s''(t)} \ | t\in (0,1] \}=\infty$.

(3)
Naudts~\cite{Naudts} discussed  a generalization of the Boltzmann--Gibbs entropy $S_1$
via a  continuous, nondecreasing  function $\phi:(0, \infty)\to (0,\infty)$.
This generalized entropy is called the \emph{$\phi$-deformed entropy}  and denoted by $S_\phi$.
For a  functional $S$ on $\mathcal{P}$ with density $s$,
if $s \in C( [0,1]) \cap C^2((0, 1])$,  $s''$ is nondecreasing and $s''(1)<0$, then 
$S$ coincides with $S_{-1/s''}$ on~$\mathcal{P}$ up to an additive constant and a multiplicative positive constant.
%


(4)
Ohta and the author~\cite{OT2} classified $\phi$-deformed entropies in terms of  a quantity 
\[
\theta_\phi:=\sup\left\{  \frac{r}{\phi(r)} \cdot  \limsup_{\varepsilon \downarrow 0} \frac{\phi(r+\varepsilon)-\phi(r)}{\varepsilon} \biggm| r>0\right\} \in [0,\infty],
\]
and analyzed  the Wasserstein gradient flow of $-S_\phi$,
where  $\theta_\phi<2$ is assumed.
Note that the Wasserstein gradient flow of $-S_\phi$ is  the evolution equation associated  to the internal energy $-S_\phi$.
When $\phi=-1/s''$, 
the assumption~$\theta_\phi<2$ guarantees the finiteness of $ \sup\{ {s''(rt)}/{s''(t)} \ | t\in (0,1] \}$ for $r\in(0,1]$.

(5)
If $s$ is  concave on $[0,1]$ with $s(0)=0$, then 
\[
s(\lambda r) \geq  (1-\lambda )s(0)+\lambda s(r)=\lambda  s(r) \quad 
\text{for}\ r\in(0,1), \quad \lambda \in [0,1].
\]
This observation implies 
\begin{align*}
S(P)
=\sum_{j=1}^n \sum_{i=1}^m s(p^i_j)
=\sum_{j=1}^n \sum_{i=1}^m s\left(\frac{p^i_j}{p_j} p_j \right)
\geq
\sum_{j=1}^n \sum_{i=1}^m \frac{p^i_j}{p_j} s(p_j)
=
\sum_{j=1}^n  s(p_j)
=
S((p_1, \cdots, p_n))
\end{align*}
for $P\in \mathcal{P}_{mn}$ satisfying \eqref{mat}.
Thus the first inequality in Theorem~\ref{diff} is trivial unless
\begin{align}\label{iff}
\sum_{j=1}^n \underline{f}(p_j)  S\left( \left( \frac{p^1_j}{p_j}, \cdots \frac{p^m_j}{p_j}\right)\right)
+s'(1)\sum_{j=1}^n \Big(\overline{f}(p_j) -\underline{f}(p_j)\Big)\geq  0.
\end{align}
However, the inequality \eqref{iff} may fail even for a $\phi$-deformed entropy $S_\phi$ with $\theta_\phi<2$.
Indeed, if we define
\[
s(r):=-\int_0^r \log \sin \left(\frac{\pi}4 t \right) dt +Cr \quad \text{for}\ r\in [0,1], 
\qquad
\text{where}\quad  C:=\int_0^1 \log \sin \left(\frac{\pi}4 t \right) dt ,
\]
then $s\in C([0,1]) \cap C^2((0,1])$ with $s(0)=s(1)=0$ and  
\[
s''(r):=-\frac{\pi}{4}\cot \left(\frac{\pi}4 r \right) < 0\quad \text{for}\ r\in (0,1].
\]
This leads to $s'(1)<0.$
Define $\phi:(0, \infty) \to (0,\infty)$ by 
\[
\phi(r):=\begin{cases} 
\dfrac{4}{\pi} \tan \left(\dfrac{\pi}4 r \right)\ & r \in (0,1],  \\ 
2(r-1)+\dfrac{4}{\pi}&  \quad \text{for}\ r>1,
\end{cases}
\]
which is continuous, nondecreasing  on $(0,\infty)$ with $\theta_\phi=\pi/2<2$ and $s''=-1/\phi$ on $(0,1]$.
We find that 
\[
2 =\inf_{t\in (0,1]} \frac{s''(t/2)}{s''(t)} 
\leq \frac{s''(u/2)}{s''(u)}
=\frac{\cos(\pi u/4)+1}{\cos(\pi u/4)} 
\leq  \sup_{t\in (0,1]} \frac{s''(t/2)}{s''(t)}=1+\sqrt{2}
\quad
\text{for}\ u\in (0,1].
\]
Then, for $P\in \mathcal{P}_{22}$ defined by
\begin{equation}\label{x}
p^1_1=\frac12, \quad p^2_1=0, \quad
p^1_2=\frac12x, \quad p^2_2=\frac12(1-x), \text{\ with\ } x\in (0,1),
\end{equation}
it turns out that 
\begin{align*}
&\sum_{j=1}^2 \underline{f}(p_j)  S\left( \left( \frac{p^1_j}{p_j}, \frac{p^2_j}{p_j}\right)\right)
+s'(1)\sum_{j=1}^2 \Big(\overline{f}(p_j) -\underline{f}(p_j)\Big) \\
&= \frac{2}{4} (s(x)+s(1-x))+2s'(1) \cdot\frac{1}{4}\left(1+\sqrt{2}-2\right)
\xrightarrow{x\downarrow 0} \frac{s'(1)}{2}(\sqrt{2}-1)<0.
\end{align*}
Thus the inequality \eqref{iff} fails  for $P$  given by \eqref{x} if $x\in(0,1)$ is small enough.
\end{remark}

The rest of this  paper is organized as follows.
We first confirm  that the negative of an internal energy should satisfy the properties (I)--(III) (Proposition~\ref{inter}).
Then we prove  Theorem~\ref{power} and  Theorem~\ref{diff}.

\section{Proofs}
For a function  $s:[0,\infty)\to \mathbb{R}$, 
define a functional $S$ on $\mathcal{P}$ by
\[
S(p):=\sum_{j=1}^n s(p_j) \quad \text{for}\ p\in \mathcal{P}_n.
\] 
This functional can be extended to a functional on the space of probability measures on a continuous state space.
If we regard $-S$ as an  internal energy, then it is natural to assume $s(0)=0$  since the energy of no matter  should be zero.
To be  physical,  the  pressure function of $-S$ should be nonnegative and nondecreasing,  
in which case $s$ is concave  (for example, see~\cite{Mc}).
Under theses conditions, $S$ satisfies the properties (I)---(III).
Note that 
if we restrict our attention to a countable state space, then it suffices that $s$ is defined on~$[0,1]$, not on~$[0,\infty)$.
\begin{proposition}\label{inter}
For a concave  function  $s:[0,1]\to \mathbb{R}$ with $s(0)=0$, 
the functional $S$ on~$\mathcal{P} $ with density $s$ satisfies the three properties {\rm(I),(II)} and {\rm (III)}.
\end{proposition}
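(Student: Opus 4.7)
The plan is to verify the three Shannon--Khinchin properties (I), (II), (III) separately; each will be a short consequence of $s$ being concave together with $s(0)=0$.

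Property (III) is immediate from $s(0)=0$: appending a zero coordinate contributes $s(0)=0$ to the sum defining $S$, so
\[
S((p_1,\ldots,p_n,0))=\sum_{j=1}^n s(p_j)+s(0)=S(p).
\]
Property (II) follows at once from Jensen's inequality applied to the concave function $s$ with the uniform weights $1/n$ at the points $p_1,\ldots,p_n$:
\[
\frac{1}{n}\sum_{j=1}^n s(p_j)\ \le\ s\!\left(\frac{1}{n}\sum_{j=1}^n p_j\right)\ =\ s\!\left(\tfrac{1}{n}\right),
\]
which, after multiplying by $n$, gives $S(p)\le n\,s(1/n)=S(p^{(n)})$.

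For property (I), I will invoke the classical fact that a finite-valued concave function on a closed interval is continuous on the interior of its domain (a consequence of the monotonicity of its secant slopes), together with a short endpoint argument at $r=0$ that uses $s(0)=0$ to pin down the one-sided behaviour of $s$ as $r\downarrow 0$; concavity with $s(0)=0$ makes the map $r\mapsto s(r)/r$ monotone on $(0,1]$, which yields the required continuity of $s$ at $0$. Once $s\in C([0,1])$ has been established, $S(p)=\sum_j s(p_j)$ is a finite sum of continuous functions of the coordinates on the compact simplex $\mathcal{P}_n$, and is therefore continuous there.

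The main obstacle, such as it is, lies entirely in (I): the only place where a nontrivial check is needed is the endpoint continuity of the density $s$ at $r=0$. Properties (II) and (III) are genuinely one-line consequences of Jensen's inequality and $s(0)=0$, respectively, and the remainder of the argument is routine.
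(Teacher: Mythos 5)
Your treatment of (II) and (III) coincides with the paper's: Jensen's inequality with uniform weights for (II), and $s(0)=0$ for (III). For (I) the paper simply asserts that concavity of $s$ on $[0,1]$ gives continuity of $s$ on $[0,1]$ and concludes; you correctly sense that the only delicate point is the behaviour of $s$ at the endpoint $r=0$, but the repair you offer does not work. The map $r\mapsto s(r)/r$ is indeed nonincreasing on $(0,1]$ (it is the chord slope from the origin), but this monotonicity does not force $s(r)\to s(0)$ as $r\downarrow 0$. Concretely, take $s(0)=0$ and $s(r)=1$ for $r\in(0,1]$: this function is concave on $[0,1]$ (for any $x,y$ and $\lambda$ with $\lambda x+(1-\lambda)y>0$ the left side equals $1$, which dominates any convex combination of values of $s$, and the degenerate cases are trivial), it satisfies $s(0)=0$, and $s(r)/r=1/r$ is strictly decreasing, yet $\lim_{r\downarrow 0}s(r)=1\neq s(0)$. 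For this $s$ the functional $S$ is not continuous on $\mathcal{P}_2$, since $S((t,1-t))=2$ for $t\in(0,1)$ while $S((0,1))=1$. So the step ``monotonicity of $s(r)/r$ yields continuity at $0$'' is a non sequitur, and in fact no argument from concavity and $s(0)=0$ alone can close this gap: for a concave function on a closed interval one only gets $s(0)\leq\lim_{r\downarrow 0}s(r)$ (lower semicontinuity at the endpoint), and the limit can exceed $s(0)$. A symmetric issue arises at $r=1$, which you do not address at all.

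To be fair, the paper's own proof is no more careful here --- it invokes the (true only on the open interval) principle that concavity implies continuity --- so your instinct to isolate the endpoint was the right one; what is missing is the recognition that endpoint continuity genuinely requires an additional hypothesis (for instance upper semicontinuity of $s$ at $0$ and at $1$, or simply $s\in C([0,1])$ as is assumed in Theorems \ref{power} and \ref{diff}). As written, your argument for (I) contains a step that would fail on the counterexample above; everything else in the proposal is correct and matches the paper.
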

\begin{proof}
The concavity of $s$ on $[0,1]$ leads to the continuity of  $s$ on $[0,1]$.
Consequently, $S$  is continuous on $\mathcal{P}_n$ and  (I) follows.

We next confirm (II).
For  $p=(p_j)_{j=1}^n\in \mathcal{P}_n$,
it follows from the concavity of  $s$ that 
\[
S(p)
=\sum_{j=1}^n s(p_j)
=n \sum_{j=1}^n \frac1n s(p_j)
\leq n s\left(\sum_{j=1}^n \frac1n  p_j  \right)
=ns(1/n )
=\sum_{j=1}^n s(1/n )= S(p^{(n)}).
\]

As for (III),  we deduce  from $s(0)=0$ that
\[ 
S((p_1, \cdots, p_n, 0))=S(p)
\quad {for\ } p=(p_j)_{j=1}^{n}\in \mathcal{P}_{n}.
\]
This completes the proof of the proposition.
\end{proof}
%
By Aleksandrov's theorem,  
a concave function defined on an interval is twice differentiable almost everywhere and its second derivative is nonpositive.
In Theorems~\ref{power} and \ref{diff}, 
the density of a functional on $\mathcal{P}$ is assumed to satisfy 
a slightly stronger condition than concavity,
that is, the twice continuous differentiability and the negativity of the second derivative.
\begin{proof}[Proof of  Theorem~{\rm\ref{power}}]
Let $S$ be a functional on $\mathcal{P}$ with density $s\in C( [0,1]) \cap C^2((0, 1])$ such that  $s''<0$ on $(0,1]$.
Assume that there exists $f:(0,1]\to \mathbb{R}$ satisfying \eqref{f}.
We find that $f(1)=1$.
Fix $r\in (0,1)$ and $\xi\in (0,1]$.
For $x\in (0,\xi)$,  we define $P\in \mathcal{P}_{32}$ by
\begin{align*}
&p^1_1=rx, \quad p^2_1=r(\xi-x),\quad p^3_1=r(1-\xi),\quad
p^i_2=\frac{1-r}{3}\quad \text{for\ }1\leq i\leq 3.
\end{align*}
We substitute this matrix $P$ into \eqref{f} and differentiate it   twice with respect to $x$ to have
\begin{align}\label{twice}
r^2\left\{ s''(rx)+s''(r(\xi-x))\right\}=f(r) \left\{s''(x)+s''(\xi-x)\right\}.
\end{align}
The choice of $x=\xi/2$ implies
\begin{align}\label{ff}
\frac{f(r)}{r^2}=\frac{s''(rx)}{s''(x)} \quad \text{for\ } x\in(0,1/2] \quad r\in (0,1).
\end{align}
Hence  $f$ is continuous on $(0,1]$.
In \eqref{twice}, if we take  $x=t\xi$ with  $\xi \in (0,1/2]$ and $t\in [1/2,1)$,  then  $t\xi \in (0,1/2), (1-t)/t \in(0,1]$ and
\begin{align*}
\frac{f(r)}{r^2}
=
\frac{s''(rt\xi)+s''(r(1-t)\xi)}{s''(t\xi)+s''((1-t)\xi)}
&=
\left(\frac{s''(rt\xi)}{s''(t\xi)}+\frac{s''(r\frac{(1-t)}{t} t\xi)}{s''(t\xi)} \right)\bigg/
\left(\frac{s''(t\xi)}{s''(t\xi)}+\frac{s''(\frac{1-t}{t}t\xi)}{{s''(t\xi)}}\right)\\
&=
\left( \frac{f(r)}{r^2} +\frac{f(r\frac{(1-t)}{t})}{(r\frac{(1-t)}{t})^2} \right)\bigg/
 \left(1+\frac{f(\frac{1-t}{t})}{(\frac{1-t}{t})^2}\right)
\end{align*}
for $r\in (0,1)$, which means that 
\[
f(r)f(\tfrac{1-t}{t})
=f(r\tfrac{(1-t)}{t})\quad
\text{for\ } r\in (0,1], \quad t\in [1/2, 1).
\]
By the continuity of $f$ on $(0,1]$,  there exists $q\in \mathbb{R}$ such that $f(r)=r^{q}$ on $(0,1]$.
We deduce from~\eqref{ff} that 
\[
s''(rx)=s''(x) r^{q-2} \quad\text{for\ } x\in(0,1/2], \quad r\in (0,1].
\]
Substituting it into \eqref{twice} and choosing $ x\in (0, 1/2], \xi=1,$ yield that 
\[
s''(r(1-x))=s''(1-x) r^{q-2}\quad\text{for\ } x\in(0,1/2], \quad r\in (0,1].
\]
Hence  $s''(r)=s''(1)r^{q-2}$ holds on $(0,1]$.
Note that 
\[
s'(1) (1-r)-s(1)+s(r)
=\int_r^1\left(  s'(1)-s'(t) \right) dt= \int_r^1 \int_t^1 s''(u) du  dt
\quad \text{for\ }r \in (0,1].
\]
If $q=1$, then 
\[
s'(1)(1-r) -s(1)+s(r) = s''(1) \left(1 +r\log r-r  \right),
\]
that is, 
\begin{align*}
s(r)=&k_1 r\log r+a_1r +b_1 \quad \text{for\ }r \in [0,1], \\
&\text{where\ } k_1:=s''(1)<0, \quad a_1=-s''(1)+s'(1),\quad b_1:=s''(1)-s'(1)+s(1).
\end{align*}
Similarly, for $q=0$, it turns out that 
\[
s'(1) (1-r)-s(1)+s(r)
=- s''(1)\left(1-r+ \log r\right)\quad
\text{for\ }r \in (0,1].
\]
When $r \downarrow 0$,
the right-hand side diverges  while the left-hand side  converges.
Thus $q=0$ is inappropriate.
Finally, if $q\neq 0,1$, then 
\[
s'(1) (1-r)-s(1)+s(r)
= \frac{s''(1)}{q-1}\left(1-r-\frac{1-r^{q}}{q}\right)
\quad \text{for\ }r \in (0,1].
\]
Since  the limit of the left-hand side exists as $r \downarrow 0$, we find $q>0$ and 
\begin{align*}
s(r)=&k_q\frac{r^{q}-r}{q-1}+a_qr+b_q,  \quad \text{for\ }r \in (0,1],\\
&\text{where\ } k_q:=\frac{s''(1)}{q}<0, \quad a_q:=-\frac{s''(1)}{q}+s'(1), \quad b_q:=\frac{s''(1)}{q}-s'(1)+s(1).
\end{align*}
Thus there exists $q>0$ such that $f(r)=r^{q}$  on $(0,1]$ and 
\[
S(p)=\sum_{j=1}^n s(p_j)
=-k_q S_q(p)+a_q \sum_{j=1}^n p_j+b_q
=-k_q S_q(p)+s(1)
\quad \text{for}\ p\in \mathcal{P}_n.
\]
This proves  the theorem.
\end{proof}
%
\begin{proof}[Proof of Theorem~\rm{\ref{diff} }]
Let  $s \in C( [0,1]) \cap C^2((0, 1])$ such that $s(0)=s(1)=0$ and $s''<0$ on $(0,1]$.
Fix $P\in \mathcal{P}_{mn}$ satisfying \eqref{mat} and $1\leq j \leq n$.
Given $u\in (0,1)$, we deduce from the definition of $\overline{f},\underline{f}$ together with the negativity of $s''$ that 
\begin{align*}
\underline{f}(p_j)s''(u)\geq  p_j^2 s''(p_j u) \geq \overline{f}(p_j)s''(u). 
\end{align*}
For $a,r \in(0,1]$, 
since 
\[
\int_0^r \int_ t^1 a^2s''(au) du dt=a s'(a) r-s(a r),
\]
integrating the above inequalities on  $u \in [t,1]$ and then $t\in [0,r]$ gives 
\begin{align}\label{once}
\underline{f}(p_j) \left( s'(1)r-s(r) \right)
\geq p_j s'(p_j)r-s( p_j r) 
\geq\overline{f}(p_j) \left( s'(1)r-s(r)  \right).
\end{align}
Choosing $r=p^i_j/p_j$ and summing over $1\leq i \leq m$,  we find that 
\begin{align*}
 \underline{f}(p_j) s'(1)  -\underline{f}(p_j) \sum_{i=1}^m s(p^i_j/p_j )
\geq p_j s'(p_j)- \sum_{i=1}^m s(p^i_j) 
\geq \overline{f}(p_j)  s'(1)-\overline{f}(p_j) \sum_{i=1}^m s(p^i_j/p_j ).
\end{align*}
The sum of these inequalities over $1\leq j\leq n $  gives
\begin{align*}
&
\sum_{j=1}^n \underline{f}(p_j) s'(1) 
-
\sum_{j=1}^n \underline{f}(p_j)  S\left( \left( \frac{p^1_j}{p_j}, \cdots \frac{p^m_j}{p_j}\right)\right)\\
&\geq   \sum_{j=1}^n p_j s'(p_j)-S(P) \\
&\geq \sum_{j=1}^n \overline{f}(p_j) s'(1)-\sum_{j=1}^n \overline{f}(p_j)  S\left( \left( \frac{p^1_j}{p_j}, \cdots \frac{p^m_j}{p_j}\right)\right),
\end{align*}
implying
\begin{align*}
&
\sum_{j=1}^n \underline{f}(p_j)  S\left( \left( \frac{p^1_j}{p_j}, \cdots \frac{p^m_j}{p_j}\right)\right)
+\sum_{j=1}^n \Big( -\underline{f}(p_j) s'(1) +p_j s'(p_j) -s(p_j) \Big)\\
&\leq  S(P)-S(p_1,\cdots, p_j) \\
&\leq \sum_{j=1}^n \overline{f}(p_j)  S\left( \left( \frac{p^1_j}{p_j}, \cdots \frac{p^m_j}{p_j}\right)\right)
+\sum_{j=1}^n \Big(- \overline{f}(p_j) s'(1) +p_j s'(p_j) -s(p_j) \Big).
\end{align*}
In \eqref{once}, if we choose $r=1$, then 
\[
\underline{f}(p_j)  s'(1) \geq p_j s'(p_j)-s( p_j )  \geq  \overline{f}(p_j) s'(1).
\]
Substituting these into the above inequalities completes the proof of the theorem.
\end{proof}

\begin{ack}
The author expresses her sincere gratitude to Professor Hiroshi Matsuzoe for stimulating discussion. 
This work was supported in part by JSPS Grant-in-Aid for Scientific Research (KAKENHI) 19K03494.
\end{ack}
\begin{bibdiv}
  \begin{biblist}
  \bib{Abe}{article}{
   author={Abe, Sumiyoshi},
   title={Axioms and uniqueness theorem for Tsallis entropy},
   journal={Phys. Lett. A},
   volume={271},
   date={2000},
   number={1-2},
   pages={74--79},
  }
\bib{IS}{article}{
   author={Ili\'{c}, Velimir M.},
   author={Stankovi\'{c}, Miomir S.},
   title={Generalized Shannon-Khinchin axioms and uniqueness theorem for
   pseudo-additive entropies},
   journal={Phys. A},
   volume={411},
   date={2014},
}
\bib{Khinchin}{book}{
   author={Khinchin, A. I.},
   title={Mathematical foundations of information theory},
   note={Translated by R. A. Silverman and M. D. Friedman},
   publisher={Dover Publications, Inc., New York, N. Y.},
   date={1957},
   pages={ii+120},
}
\bib{Mc}{article}{
   author={McCann, Robert J.},
   title={A convexity principle for interacting gases},
   journal={Adv. Math.},
   volume={128},
   date={1997},
   number={1},
   pages={153--179},
  }
\bib{Naudts}{book}{
   author={Naudts, Jan},
   title={Generalised thermostatistics},
   publisher={Springer-Verlag London, Ltd., London},
   date={2011},
   pages={x+201},
   isbn={978-0-85729-354-1},
}
\bib{OT2}{article}{
   author={Ohta, Shin-Ichi},
   author={Takatsu, Asuka},
   title={Displacement convexity of generalized relative entropies. II},
   journal={Comm. Anal. Geom.},
   volume={21},
   date={2013},
   number={4},
   pages={687--785},
}
\bib{Suyari}{article}{
   author={Suyari, Hiroki},
   title={Generalization of Shannon-Khinchin axioms to nonextensive systems
   and the uniqueness theorem for the nonextensive entropy},
   journal={IEEE Trans. Inform. Theory},
   volume={50},
   date={2004},
   number={8},
   pages={1783--1787},
}
\bib{Tempesta}{article}{
   author={Tempesta, Piergiulio},
   title={Beyond the Shannon-Khinchin formulation: the composability axiom
   and the universal-group entropy},
   journal={Ann. Physics},
   volume={365},
   date={2016},
   pages={180--197},
}
  \bib{Tsallis}{article}{
   author={Tsallis, Constantino},
   title={Possible generalization of Boltzmann-Gibbs statistics},
   journal={J. Statist. Phys.},
   volume={52},
   date={1988},
   number={1-2},
   pages={479--487},
}
 \end{biblist}
\end{bibdiv}
\end{document}